\newtheorem{theorem}{Theorem}
\newtheorem{lemma}{Lemma}
\author{
\IEEEauthorblockN{Yehezkel S. Resheff}
\IEEEauthorblockA{Intuit \\ 
hezi.resheff@gmail.com}
\and
\IEEEauthorblockN{Moni Shahar}
\IEEEauthorblockA{Intuit \\
monishahar@gmail.com}
}
\begin{document}
\title{A Statistical Approach to Inferring Business Locations Based on Purchase Behavior}

% \titlerunning{Inferring Business Locations from Purchase Behavior}

% \author{Yehezkel S. Resheff  \and
% Moni Shahar}

% \institute{Intuit}

\maketitle

\begin{abstract}
Transaction data obtained by Personal Financial Management (PFM) services from financial institutes such as banks and credit card companies contain a description string from which the merchant identity and an encoded store identifier may be parsed. However, the physical location of the purchase is absent from this description. In this paper we present a method designed to recover this valuable spatial information and map merchant and identifier tuples to physical map locations. We begin by constructing a graph of customer sharing between businesses, and based on  a small set of known “seed” locations we formulate this task as a maximum likelihood problem using a model of customer sharing between nearby businesses. We test our method extensively on real world data and provide statistics on the displacement error in many cities.
\end{abstract}

\section{Introduction}

Personal Financial Management (PFM) services and financial aggregators are software applications that collect and bring together information from multiple sources to provide users with a single stop shop for tracking and managing their personal finances. For individuals with multiple bank accounts, credit cards, and utility bills, seeing the big picture and gaining insights into their financial health can be incredibly valuable. Indeed, services of this sort are used by millions of people in the US alone. 

One of the most important types of information collected and analyzed by PFM services are financial transactions. Bank and credit card transactions are retrieved from financial institutes after users provide the appropriate credentials. These pieces of information essentially sum up to the full financial story pertaining to an individual (even in the case of small transactions where cash still dominates \cite{wakamori2017shoppers}, ATM withdrawals are still recorded, and they tell part of the story).

Across the plethora of financial institutes in the US, the information consistently retrieved by the service is the date, a dollar amount, and a varying length string describing the transaction. These strings are semi human-readable, and include information such as the merchant, time-stamps, and other pertinent information including a (typically numeric) store identifier when a purchase is made in a chain store. 

A piece of information that is not directly present in the transaction data, and that is of extreme importance to PFM services is precise location information. When a purchase is made at a chain store, the identifier accessible in the transaction description string is not directly mapped to a physical location via publicly accessible directories and data sources. A database of all businesses in the US is commercially available via several providers. However, the list of stores in a chain or franchise will not contain these arbitrary internal identifiers. The task we tackle in this paper is to infer store locations based on purchasing patterns across users.  

The need for location information arises in many aspects of the activity of a PFM. First and foremost as an enabler of personalization, and recommendation of more relevant content. This information is also useful in various tasks such as fraud detection, advertising, and user profiling to name a few. Moreover, the location of the businesses and individuals, together with the purchase data can be used for higher level economical analysis both of stores and of populations. 

\section{Data collection and processing}

\begin{figure*}[h]
\label{fig:map}
\centering{}
\includegraphics[width=16cm, trim=0cm 0cm 0 0, clip]{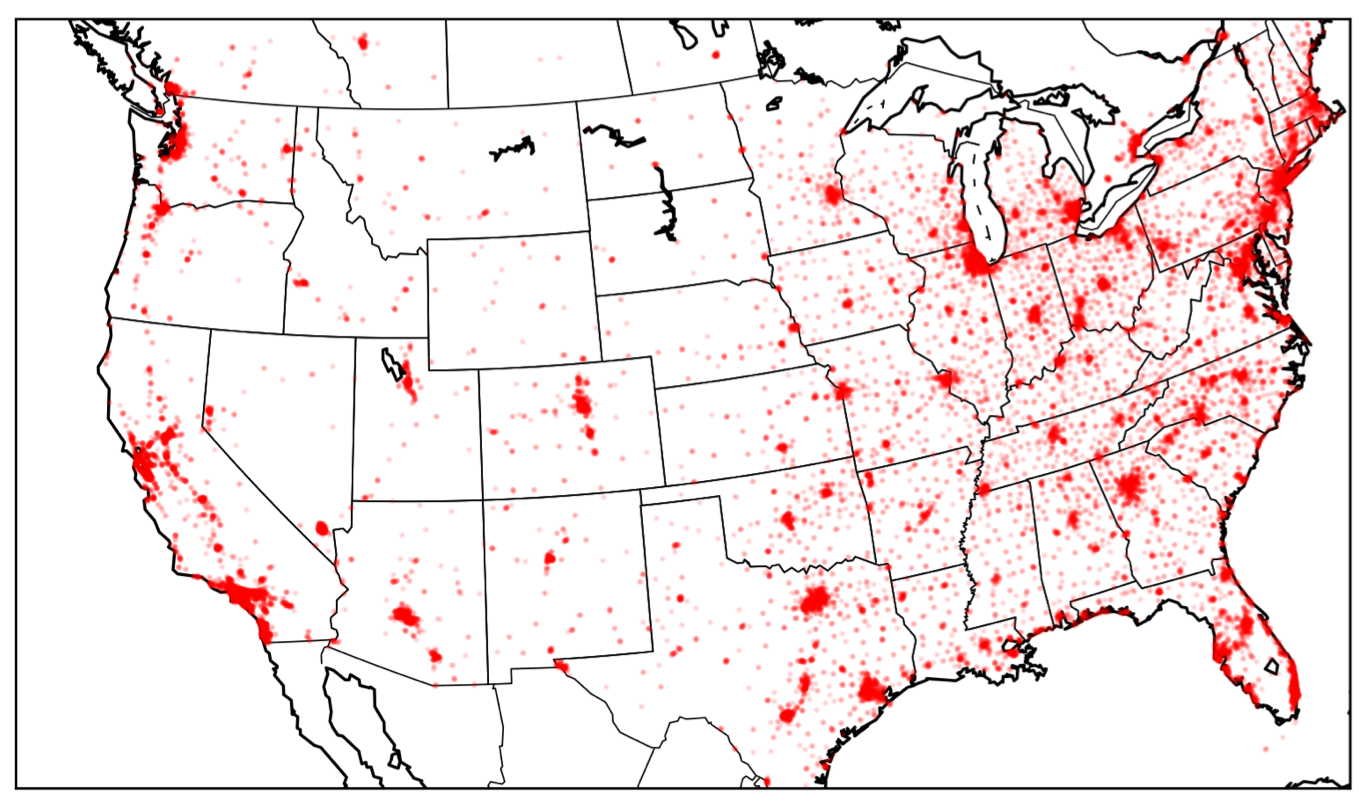}
\caption{The spatial distribution of approximately $40,000$ shops is the dataset used in this paper (Only US stores were used).
}
\end{figure*}

Data used for this work was collected by a large financial data aggregation service. During registration, users provide credentials that allow the aggregator to continuously obtain transaction data from over $25,000$ financial institutions including banks and credit card companies. A record describing a transaction typically contains the date of the purchase, a dollar amount, and a \textit{description string} explaining the nature of the transaction. The first step in the process is to extract structured information out of these strings. Namely, we would like obtain the identity of the merchant from which a purchase was made. For chain stores this includes the name and branch identifier (See Table \ref{tbl:pre-processing} for some examples). 

The main difficulty in extracting structured information from the transaction description strings arises from the variability in processing the information undergoes along the way. The identity of the merchant, as well as the financial institutions processing the information en route, all affect the structure and the information available in the final string obtained by the service. As a result, a number of heuristic and machine learning methods are used to obtain structured information. The extracted information includes fields describing where and when the transaction took place such as exact time-stamps, location information, merchant name, and branch identifier. 

The information we use in this work are the set of purchases for individual users, each characterized only by the the merchant and branch-ID. These branch identifiers uniquely represent physical real-world locations, but the mapping to the real world is unknown. The main task we tackle in this paper is to utilize the full set of data to infer the real world locations of these arbitrary identifiers, based on the key insight that individual stores that share a large percent of their customer base are likely to be near each other. 

Overall, available data contains over 15,000,000,000 transactions per year, arriving from over 10,000,000 users. This represents several percent of all private transactions in the US. In our experiments (Section \ref{sec:experimental}), we use slices of this data pertaining to specific chains in well-defined geographical areas. All experiments were conducted with data from year 2017, and the first quarter of 2018. 

A ground truth location dataset is constructed by scraping websites of the relevant store chains and obtaining both store identifiers and address information. This approach is limited for two reasons. First, the process of building the mechanism to scrape each website is time consuming since the tool has to be adapted to each new store chain. More importantly though, only a small fraction of chains presents the stores with both their addresses and the internal identifiers we see in the transaction data. The ground truth dataset is used in this work to test the methods (Section \ref{sec:experimental}). It is however also an important component in the general inference system for the millions of store locations in the US, since it serves as as seed with which we can start inferring new locations. The methods we use to do this are presented in the next section.

\begin{table*}
\caption{Pre-processing includes the extraction of the merchant name, and store identifier when possible. For some transactions, either one or both of these may not be available; others include additional information such as city or state. \textbf{Our ultimate goal in this paper is to fill in the rightmost column of this table} -- the exact coordinates of the shop. Note this table is an illustration only, and doesn't contain real customer data.}
\label{tbl:pre-processing}
\begin{center}

\begin{tabular}{|l|l|r|r|r|}
\hline 
description string & merchant & store ID & city & exact location         \tabularnewline \hline \hline 
Starbucks Store 06607 & Starbucks & 06607 & ? & ?                  \tabularnewline \hline 
Starbucks Coffee  & Starbucks & ? & ? & ?                         \tabularnewline \hline 
MCDONALD'S M6793  & McDonald's & 6793 & ? & ?                     \tabularnewline \hline 
PIZZA HUT 030579  & Pizza Hut & 030579 & ? & ?                     \tabularnewline \hline 
SHELL OIL 5908 SAN DIEGO CA & Shell & 5908 & San Diego & ? \tabularnewline \hline 
The Who Knows where shoe store     & ? & ? & ? & ?                \tabularnewline \hline 
\end{tabular}

\end{center}
\end{table*}

\section{Methods}
\label{sec:methdos}

\begin{figure}[h]
\centering{}
\includegraphics[width=9cm, trim=2cm 0 0 0, clip]{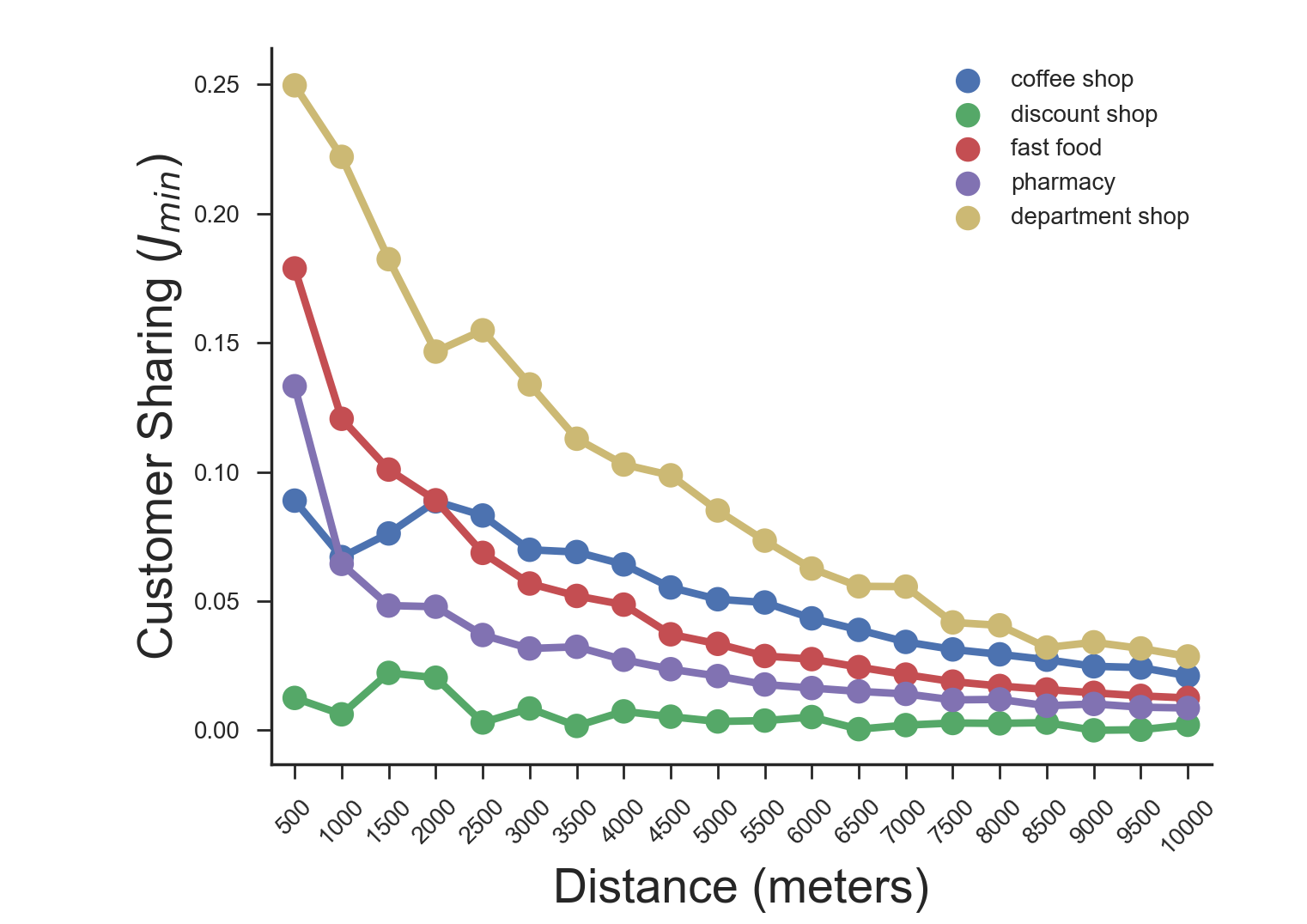}
\caption{Average Customer sharing as a function of distance from a focal shop for $5$ large chains. In all cases the focal shop is from the coffee shop chain. Customer sharing is measured as the percent of customers from the smaller of the two shops (see formula \ref{eq:j-min}).
}
\label{fig:mfd}
\end{figure}

\begin{figure}
\centering{}
\includegraphics[width=9cm, trim=2.5cm 0cm 1cm 2cm, clip]{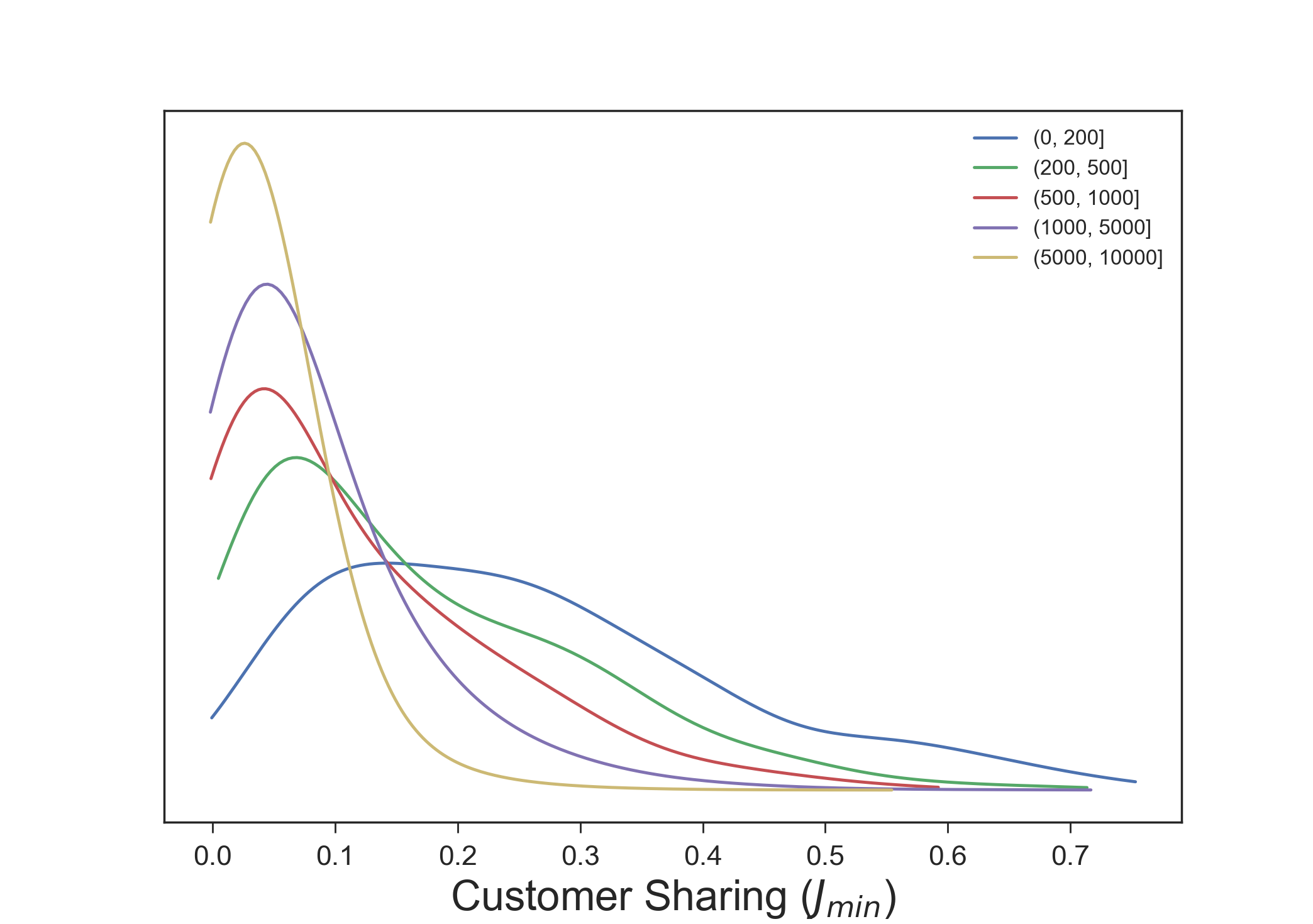}
\caption{$p(m|r)$: Kernel Density Estimates (KDE) for radius binned in values 0-200, 200-500, 500-1000, 1000-5000, 5000-10000. All distance values here and throughout the paper are in meters. }
\label{fig:kde}
\end{figure}

In this section we develop a statistical method used to infer locations based on user transactions. The method assumes a subset of known location, and a store-customer relationship matrix, indicating which users are customers of which stores. The subset of known locations arises in our data via chains with public facing store-IDs that are maintained also in the transaction data, and thus allow us to map the signature to physical locations. The store-customer relationship matrix is a Boolean indicator matrix, where the $(i,j)-th$ position determines whether user $i$ frequents store~$j$. 

Two key insights allow us to develop the proposed model. The first is the assumption of locality of consumer behavior, and the second is the real-world metric of locality. By locality of consumer behavior we mean that, in the probabilistic sense, a customer of a given shop $x$ is more likely to be a customer of a shop $y$ in the same neighborhood, than of a shop $z$ which is far away. This intuitive insight is backed by the data, as seen in Figure \ref{fig:mfd}. 

The second insight is that Euclidean distance is not a good determinant of locality. In a dense urban environment the distance between shops is small, and even at modest distances we don't expect to see much overlap in customer base. In a more rural area on the other hand, the distance between shops a single user frequents can easily span distances on the order of the radius of a a city. 

As a result, we must either consider separately areas with different density (either by population or business place density), or otherwise enter this source of variability into our model. In the current work we maintain a rather homogeneous density by inferring locations for stores one city at at time, but our framework can easily accomodate extra paramters and given enough data can even learn their effect.  

\subsection{Maximum Likelihood Inference of Unknown Store Locations}
\label{sub-sec:max-like}

We consider a dataset ordered as a tuple:
\[
X = (p, q, M)  
\]

\noindent where $p = \{p_i\}_{i=1}^{P}$ is a set of known stores and locations, $q = \{q_i\}_{i=1}^{Q}$ index the unknown locations, and $M \in \mathbb{R}^{P \times Q}$ gives the customer sharing matrix between stores in the sets of known and unknown locations. This matrix is a representation of the bipartite graph where stores are nodes and edges exist between each store with a known location $p_i$ and store with an unknown location $q_j$ -- the element $m_{ij}$ in the $(i,j)$-th position of the matrix $M$ -- giving the customer sharing index between the two stores. As a measure of customer sharing it would be natural to use the Jaccard index: 

\[
	 J(i,j) = \frac{|c(p_i) \cap c(q_j)|}{|c(p_i) \cup c(q_j)|}
\]

\noindent where $c(\cdot)$ denotes the set of customers of a store. The Jaccard index notion of customer sharing of stores does however have some pitfalls. For instance, when comparing a small shop to a large one, even if all the customers of the small shop also frequent the large shop, the computed similarity will still be small. For this reason we propose an alternative minimum-normalized version:

\begin{equation}
\label{eq:j-min}
	m_{i,j} \equiv J_{min}(i,j) = \frac{|c(p_i) \cap c(q_j)|}{min(|c(p_i)|,|c(q_j)|)}
\end{equation}

\noindent which better captures the desired notion of customer sharing when stores are very different in size of customer base. An important component of the model will be the conditional distribution of customer sharing (denoted $m_{ij}$), given the locations of a known shop $p_i$ and an unknown shop $q_j$. We will assume that this distribution depends only on the distance between the two shops, namely:

\begin{equation}
\label{eq:pmgivenr}
	p(m_{ij}|p_i,q_j) = p(m_{ij}|dist(p_i, q_j)) = p(m_{ij}|r_{i,j})
\end{equation}

\noindent here and elsewhere, we use $r_{i,j}$ to denote the distance between the known location $p_i$, and the unknown location $q_j$. in practice, we estimate $p(m_{ij}|r_{i,j})$ empirically based on the set of known locations and customer sharing between them. The joint likelihood of the dataset is then:

\begin{equation}
\label{eq:log-like-0}
	Pr(p,q,M) = \Pr(p) \Pr(q) \Pr(M|p,q) \propto \prod_{i=1}^{P} \prod_{j=1}^{Q} p(m_{i,j}|r_{i,j})
\end{equation}

\noindent where we assume a flat and i.i.d. distribution of locations of both known and unknown entities. Recall that our aim is to resolve the unknown locations $q$:

\begin{equation}
\label{eq:log-like-1}
	q^* = \underset{q}{\max} \log \Pr(p,q,M) = \underset{q}{\min} -\sum_{i=1}^{P} \sum_{j=1}^{Q} \log p(m_{i,j}|r_{i,j})
\end{equation}

At this point the problem statement indicates all $P$ stores are important for the determination of each of the $Q$ unknown locations. However, the problem is further simplified by revisiting the locality property of purchase behavior. Overall, a shop of unknown location $q_j$ will have significant customer overlap with very few of the known locations $p_i$, meaning $M$ is essentially a sparse matrix. We will thus only consider those locations where $m_{i,j} \geq \theta$ for some threshold $\theta$. Customer sharing values below this threshold are treated as zeros. The correct value for this parameter is determined experimentally. The need for such a threshold also arises statistically, since rare co-occurrence where a customer happened to make a purchase at an unusual location cause small sample size effects that make inference very noisy, thus the introduction of the threshold to the customer sharing values adds robustness to our model. 

\noindent Finally, taking the threshold on customer sharing into account the problem becomes:

\begin{equation}
\label{eq:log-like-alpha}
	q^* = \underset{q}{\min} -\sum_{j=1}^{Q} \sum_{i: m_{ij} \geq \theta} \log p(m_{i,j}|r_{i,j})
\end{equation}

\noindent We now briefly discuss some properties of formulation (\ref{eq:log-like-alpha}):

\begin{enumerate}
\item Separability: since $r_{ij} = dist(p_i, q_j)$, and $p_i$ is a store with known location, the problem stated in (\ref{eq:log-like-alpha}) is completely separable in the unknown stores $q_j$. This means we may resolve each unknown point independently of all others. This property has consequences to the scalability of the approach to big datasets by enabling a trivial parallelization of the store location inference. In short, the process begins with a global step of estimating the underlying conditional customer sharing probabilities $p(m|r)$, then the unknown locations are segmented into batches and their locations are inferred. 

The drawback of this parallel approach is that information is not shared between \textit{unknown} locations. Ideally, we would want each new unknown location we infer to benefit from the locations of the already-inferred places. Note that in this case the graph is no longer bipartite, however this doesn't change the formulation. A trade-off between the need for scalability and local information sharing of this sort is achieved by segmenting the data by areas, and processing each area (such as city or county) separately. This is the approach we take (see Section \ref{sec:experimental}). It is however possible to treat the already inferred locations as a less trusted information (relative to locations which are known as a fact) for the inference of the current location, by giving a lower weight to such terms in the sum. For the sake of simplicity we didn't use this information in the procedure we adopted, and the experimental results presented in this paper.   

\item In the single neighbor limit where for an unknown point $q_j$ there is a single point $p_i$ for which $m_{ij} \geq \theta$, the problem formula for resolving that point becomes: 
\[
q^* = \underset{q}{\min} -\log p(m_{i,j}|r_{i,j})
\]
\noindent which is the circle around the point $p_i$ of radius $r$ determined by an argmin operation over $f(r) = p(m_{i,j}|r)$, with $r = dist(p_i, q_j)$. Intuitively, since each known locations active in the problem gives a single radius constraint of this sort, in the general case we should need at least $3$ locations (in general position), in order to be able to uniquely determine the unknown location. In the single known point case however, for a large enough $\theta$, $p(m_{i,j}|r)$ is maximal for $r=0$, and thus the solution collapses to the single known point $p_i$ (this property of $p$ is further discussed bellow, and can be seen empirically in Figure \ref{fig:kde}). 

\item In many cases it is reasonable to force the inference problem for an unknown point to be based only on a few known points in its vicinity. Technically, this may be achieved by selecting a large value of $\theta$, i.e. using only shops with a large shared customer group with the unknown location under consideration (furthermore, the value of the threshold could in theory be chosen dynamically per location in order for the problem to contain a number of known locations within a pre-determined boundary). In this region, the solution to \ref{eq:log-like-alpha} is in the convex hull of the set of known locations used. To see this, we note that in the high threshold region of the problem, values $m>\theta$ of customer sharing are monotonically more likely as the distance between the shops decreases (this can be seen empirically in Figure \ref{fig:kde}). We formalize this notion:

\begin{theorem}
\label{thm:1}
Let $\{p_i\}_{i=1}^{P}$ be a set of known locations, and $q$ an arbitrary unknown location. Assume that $p(m|r)$ is monotonically decreasing in $r$ s.t. $\forall m>\theta, r_1<r_2: p(m, r_1) \geq p(m, r_2)$, then the solution to the problem $\underset{q}{\min} f(q)$ where $f(q) = -\sum_i \log p(m_i, dist(p_i, q))$ is in the convex hull of  $\{p_i\}_{i=1}^{P}$.   
\end{theorem}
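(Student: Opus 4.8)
The plan is to argue by contradiction: I show that any candidate solution lying outside the convex hull $C = \mathrm{conv}(\{p_i\}_{i=1}^{P})$ can be improved by replacing it with its Euclidean projection onto $C$. The first step is to convert the monotonicity hypothesis on $p(m\mid r)$ into a statement about the individual summands of $f$. Since $-\log$ is decreasing, the assumption that $p(m,r)$ is decreasing in $r$ (for the relevant regime $m>\theta$) makes each term $g_i(q) = -\log p(m_i, \mathrm{dist}(p_i,q))$ a \emph{non-decreasing} function of the distance $\mathrm{dist}(p_i,q)$. Thus $f$ is a sum of functions each monotone in the distance from $q$ to one anchor point $p_i$, and minimizing $f$ amounts to keeping $q$ simultaneously close to all the $p_i$.

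The key geometric step is the standard variational property of projection onto a closed convex set: if $q\notin C$ and $q' = \Pi_C(q)$ denotes its projection, then $\langle q - q',\, c - q'\rangle \le 0$ for every $c\in C$. Applying this with $c = p_i$ (each $p_i$ lies in $C$) and expanding $\|p_i - q\|^2 = \|p_i - q'\|^2 + \|q'-q\|^2 - 2\langle q-q',\, p_i - q'\rangle$, the cross term $-2\langle q-q',\, p_i-q'\rangle$ is nonnegative, so $\|p_i-q\|^2 \ge \|p_i-q'\|^2 + \|q'-q\|^2$. Because $q\notin C$ forces $\|q'-q\|>0$, this gives $\mathrm{dist}(p_i,q') < \mathrm{dist}(p_i,q)$ strictly, and for every index $i$ at once. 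In words: projecting onto $C$ moves strictly closer to all anchor points simultaneously.

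Combining the two steps closes the argument. If $q^\star$ minimized $f$ but lay outside $C$, its projection $q'\in C$ would satisfy $\mathrm{dist}(p_i,q') < \mathrm{dist}(p_i,q^\star)$ for all $i$, hence $g_i(q')\le g_i(q^\star)$ for every $i$ by monotonicity, and therefore $f(q')\le f(q^\star)$. This already shows the optimal value is attained inside $C$; upgrading the inequality to strict for at least one term yields $f(q') < f(q^\star)$, a contradiction, so no minimizer can lie strictly outside $C$.

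I expect the main obstacle to be precisely this last upgrade. The theorem is stated with a \emph{weak} monotonicity assumption, $p(m,r_1)\ge p(m,r_2)$, which only delivers $f(q')\le f(q^\star)$ and hence only guarantees that \emph{some} minimizer lies in $C$, not that every minimizer does. To conclude that the solution itself is in $C$ one must either invoke strict monotonicity of $p(\cdot\mid r)$ in $r$ on the operative range (so the strict distance decrease transfers to a strict decrease of $f$), or settle for the weaker reading that a minimizer lies in $C$. I would also record, as a minor existence remark, that a minimizer exists because the infimum of $f$ over $\mathbb{R}^2$ equals its infimum over the compact set $C$, on which $f$ is continuous.
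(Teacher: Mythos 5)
Your proposal follows the same overall strategy as the paper's proof: assume the minimizer $q^\star$ lies outside the hull, project it onto the hull, observe that the projection is strictly closer to \emph{every} anchor $p_i$, and let monotonicity of $p(m|r)$ transfer this into a decrease of $f$. The difference lies in how the key geometric fact is established. The paper proves it as a standalone lemma via a supporting hyperplane (Hahn--Banach) at the projection point and a triangle argument (the angle at the projection is the largest angle of the triangle, so the opposite side is the longest); you instead use the variational characterization of projection onto a closed convex set, $\langle q - q', c - q'\rangle \le 0$ for all $c \in C$, plus a Pythagorean expansion, which gives the quantitative bound $\|p_i - q\|^2 \ge \|p_i - q'\|^2 + \|q - q'\|^2$ and avoids any appeal to angles. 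Both routes are standard and correct; yours is more algebraic and self-contained, the paper's more classically geometric. More importantly, your closing caveat is a genuine catch: the theorem hypothesizes only \emph{weak} monotonicity ($p(m, r_1) \ge p(m, r_2)$ for $r_1 < r_2$), so the projection argument delivers only $f(q') \le f(q^\star)$, which shows that \emph{some} minimizer lies in the hull but does not by itself contradict the optimality of $q^\star$. The paper's own proof elides this by asserting that each summand ``is decreased,'' which tacitly requires strict monotonicity on the operative range of distances; your explicit identification of the two repairs (assume strict monotonicity, or weaken the conclusion to existence of a minimizer in the hull) addresses a gap that the published argument glosses over.
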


\begin{proof}
Suppose, on the contrary, that the solution $q^*$ is not in the convex hull of $\{p_i\}_{i=1}^{P}$. By the following lemma (Lemma \ref{lemma:1}.) the projection of $q^*$ onto the convex hull is closer than $q^*$ is to \textbf{each} of the points $\{p_i\}_{i=1}^{P}$, and thus by the monotonicity of $p(m|r)$ in $r$ we get that each element in the sum $ \sum_i -\log p(m_i, dist(p_i, q))$ is decreased, and therefore $f(q)$ is decreased, in contradiction to $q^*$ being the optimum.

\begin{lemma}
\label{lemma:1}
Let $S \subset R^n$ be a compact convex set. $\forall p \in S, p_0 \notin S: ||p-p_0|| > ||p-p_1||$, where $p_1$ is the projection of $p_0$ onto $S$. 
\end{lemma}

\begin{proof}
Let $p_1$ be the projection of $p_0$ onto $S$, i.e., the closest point in $S$ to $p_0$. Note that such a point exists due to the compactness of $S$. From compactness it is easy to see that $p_1$ is on the boundary of $S$, for otherwise there exists a ball of radius $\epsilon$ such the $B_\epsilon(p_1) \subseteq S$ on which there is a point closer to $p_0$. From the Hahn-Banach theorem there exist a supporting hyper plane $P$ such that $P\cdot p_0 \ge c = P \cdot p_1 \ge P \cdot p \quad \forall p \in S$ . Consider the triangle formed by $p_0,p_1,p$. The angle $\angle p_0p_1p$ is the largest angle (since a supporting hyperplane of $S$ at $p_1$ separates the set), and therefore the edge  $pp_0$ which is opposite to it is larger than the proximal edge $pp_1$. 
\end{proof}

\end{proof}

\noindent This concludes the proof of Theorem \ref{thm:1}. As a consequence of this theorem, we must be mindful when selecting values of the threshold $\theta$ that imply a solution in the convex hull when inferring for stores in areas where the extent of coverage of the known locations is insufficient for us to assume that the unknown location is in their convex hull. On the other hand, we will use larger values of $\theta$ and the convex hull property to help in resolving locations in other cases, this is further discussed in the following items.

\item A key quantity we use throughout this discussion is the conditional distribution of customer sharing given store distance $p(m_{i,j}|r_{i,j})$. Up until now we have assumed it to be known, however in practice this quantity must be estimated from the data. More specifically, we look at the customer sharing and distances between pairs of \textit{known} locations. Both distance ($r$) and customer sharing ($m$) are binned, and the appropriately normalized frequency table is computed once and stored for the computation of each inferred location.

\item Finally, when it comes to resolving an unknown location, the natural solution for maximum likelihood problems of this sort is to use gradient based methods; in this case:

\begin{equation}
\begin{split}
    \frac{\partial}{\partial q} &\sum_i \log p(m_i|r_i) = \\
- & \sum_i \frac{1}{p(m_i|r_i)} \frac{\partial p(m_i|r)}{\partial r} \frac{\partial r}{\partial q} \Bigg\rvert_{r=r_i} 
= \\ 
& \sum_i \frac{p'(m_i|r_i)}{p(m_i|r_i)} \frac{p_i - q}{r_i}
\end{split}
\end{equation}

\noindent Leading to the gradient descent algorithm (Algorithm \ref{alg:1}). It is noteworthy that the computation of the partial derivative $\frac{\partial}{\partial r}p(m|r)$ is only possible numerically since we do not assume any functional form of this conditional distribution. One could instead select a parametric family for this distribution and proceed with an analytic partial derivative. In the analytic case, the binned estimation of the conditional distribution described above would be replaced by estimating the parameters. As a consequence, for numeric stability we must estimate $p(m|r)$ in relatively narrow bins of $r$. This leads to a small amount of data in each bin and hence again to stability issues. 

These problems make the gradient descent solution less appealing for our specific problem. Luckily, by Theorem \ref{thm:1} we have that for large enough values of $\theta$ the solution is in the convex hull of the customer-sharing neighbors of the unknown shop. In this region we are looking for a solution in a relatively contained area in $2D$, and can apply an exhaustive (or in practice grid) search. We empirically find that a threshold of $\theta = .15$ is sufficient for the convex hull property to hold (see also Figure \ref{fig:kde}). 

\begin{algorithm}[tbh]
	\begin{algorithmic}[1] 
    	\STATE{$q \leftarrow q_0$}
        \WHILE{not converged}  
        	\STATE{$\forall i: r_i \leftarrow dist(p_i, q)$}
            \STATE{$q \leftarrow q - \alpha \sum_i \frac{p'(m_i|r_i)}{p(m_i|r_i)} \frac{p_i - q}{r_i}$}
        \ENDWHILE
    \end{algorithmic}
    
    \protect\caption{Gradient Descent algorithm for maximum likelihood unknown store localization.}
	\label{alg:1}
\end{algorithm}

\end{enumerate}

\begin{figure}[h]
\label{fig:graph}
\centering{}
\includegraphics[width=10cm, trim=0cm 0 0 0, clip]{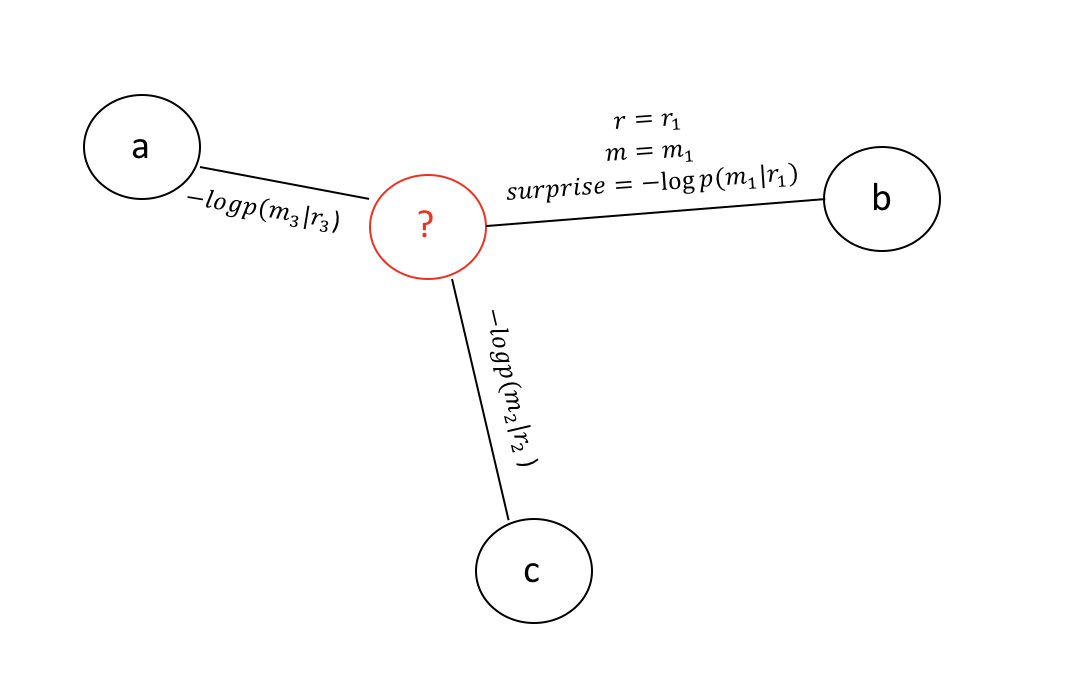}
\caption{The geometric interpretation of the maximum likelihood formulation (Equation~\ref{eq:log-like-alpha}). Our aim is to find a location assignment in the convex hull of the neighboring stores (in the customer sharing space), such that the sum of the surprise of the observed store sharing index ($m_i$) given the physical world distances between the assignment and the known locations ($r_i$) will be minimized. }
\end{figure}

\section{Experimental Evaluation}
\label{sec:experimental}

In this section we present a thorough evaluation of the proposed method. Ground truth locations were obtained for a subset of approximately $40,000$ shops (a tiny portion of the many hundreds of thousand merchants in the US), including a large coffee shop chain, a fast food franchise, a pharmacy, a department store, and a discount super-market. The store to store customer sharing graph was constructed for the entire list of stores in this dataset, and thresholded at $\theta=0.15$, this value was decided based on Figure \ref{fig:mfd} as the minimal value where customer sharing is monotonically more likely as distance between shops decreases, a property important in our algorithm for resolving locations (see Section \ref{sub-sec:max-like} above).  

We test the proposed method on stores from two lists of cities. The first experiment is conducted on a list that is representative of large cities in the US. The second consists of large to medium cities in the state of California. Together, these experiments test the method on cities representing the core user base of the financial data aggregation service the data is obtained from. Table \ref{tbl:cities-descriptive} lists the cities used in this section. The number of stores per city in the dataset varies substantially between $509$ (Chicago) and $64$ (Bakersfield, CA). 

Another variable that is expected to have a strong effect on the ability to recover locations is the store density in each area. This value is estimate as average of either the mean or the median distance to the nearest store in the data, and is calculated per city. The median distance ranges between $517$ meters (Fresno) to $0$ meters (Chicago). Note That value of $0$ distance is obtained when two shops are in the same shopping center or mall, and thus share the same address, but the real world distance can be up to several hundred meters (for our purpose, placing everything in a mall at the same location makes sense, because the effect of distance on customer sharing within this space is expected to be minimal).

The average value of customer sharing (Final column in Table \ref{tbl:cities-descriptive}) with the nearest shop seems not to directly relate to the store density value. We interpret this as an indication that distances have a different meaning to customers depending on the environment. The tendency to walk or travel longer distances in a specific city should ideally be taken into account. We further discuss this issue with respect to future work in the Conclusions section bellow. 

In each of the following experiments, the ability of the proposed method to infer store location was tested and compared to two baseline options. The \textit{NN-1} baseline consists of the nearest neighbor in customer sharing space. Namely, for each test store the location of the known store with the highest value in the customer sharing index is adopted as the inferred location. Likewise, the \textit{NN-3} method uses the center of mass of the three stores with the largest extent of customer sharing with the test store, and adopts that as the inferred location for the unknown store.

The tests are preformed in a leave-one-out scheme. One at a time, each of the stores in the dataset is treated as an unknown location, and the location is to be inferred by each of the methods based on all other stores. The misplacement error is then computed as the distance between the inferred location and the actual location of the store. This testing scheme is meant to approximate the error expected when utilizing these methods to find the locations of stores with currently unknown locations, based on the full dataset of known locations.

The California cities evaluation included Los Angeles, San Diego, San Francisco, San Jose, Sacramento, Fresno, and Bakersfield. Firstly, for all cities in the list both the baseline \textit{NN-1} and the proposed method (\textit{MaxLike} column in table \ref{tbl:results-cali}) achieve the desired neighborhood precision. This serves as an indication that the general method of inferring location based on shred customers is feasible. Secondly, in all cases the proposed maximum likelihood based method outperforms both baseline methods, often by a substantial margin.

The purpose of the \textit{NN-3} baseline is to determine whether the proposed maximum likelihood method succeeds mostly due to the integration of information from several nearby locations. However, here and in all other experiments we preformed, the \textit{NN-3} method is strongly dominated by \textit{NN-1}, meaning that if we are to simply use weighted average location, then adding extra locations is of no use. This result persisted also when going beyond $3$ neighbors, further indicating the usefulness of our statistical model.

\begin{table*}
\caption{Descriptive statistics for the data used in the experiments. n - number of stores per city in the dataset, dist(median) - the median distance in meters to the nearest known location, dist (mean) - the mean distance in meters to the nearest known location, $J_{min}$ - the average customer sharing index with the nearest location.}
\label{tbl:cities-descriptive}
\begin{center}

\begin{tabular}{|l | c | c | c | c |}
\hline
City          & n & distance (median) & distance (mean) & customer sharing ($J_{min}$)  \tabularnewline \hline \hline 
New York      & 339 & 115  & 297 & 0.092      \tabularnewline \hline   
Chicago       & 509 & 0    & 531  & 0.124      \tabularnewline \hline       
Houston       & 449 & 101  & 1567 & 0.161  \tabularnewline \hline       
Philadelphia  & 120 & 172  & 514  & 0.134  \tabularnewline \hline       
Phoenix       & 215 & 70   & 520  & 0.153  \tabularnewline \hline       
Los Angeles   & 219 & 254 & 500 & 0.098  \tabularnewline \hline   
San Diego     & 178 & 192 & 538 & 0.112   \tabularnewline \hline       
San Francisco \: & \: 174 \: &  96 & 190 & 0.138   \tabularnewline \hline       
San Jose      & 135 & 179 & 355 & 0.118    \tabularnewline \hline       
Sacramento    & 120 & 208 & 486 & 0.162    \tabularnewline \hline   
Fresno        & 70  & 517 & 737 & 0.144   \tabularnewline \hline
Bakersfield   & 64  & 281 & 635 & 0.186    \tabularnewline \hline
\end{tabular}

\end{center}
\end{table*}

\begin{table}[ht!]
\caption{Median displacement error for California cities.}
\label{tbl:results-cali}
\begin{center}

\begin{tabular}{|l|r|r|c|}
\hline
City            & NN-1 & NN-3 & MaxLike  \tabularnewline \hline \hline 
Los Angeles     & 1034 & 1743 & 741      \tabularnewline \hline   
San Diego       & 1509 & 2715 & 1268     \tabularnewline \hline       
San Francisco   & 269  & 683  & 255      \tabularnewline \hline       
San Jose        & 1473 & 1845 & 1283     \tabularnewline \hline       
Sacramento      & 1413 & 2110 & 1162     \tabularnewline \hline   
Fresno          & 2198 & 3331 & 1976     \tabularnewline \hline
Bakersfield     & 1673 & 2556 & 1265     \tabularnewline \hline

\end{tabular}

\end{center}
\end{table}

The median displacement error in inferring city location in California (Table \ref{tbl:results-cali}) are highly variably as we would expect from the varying sample size and density (Table \ref{tbl:cities-descriptive}). In San Francisco -- a dense city with a large sample size -- we obtain a median displacement of $255$ meters. This goes up to nearly $2$ Kilometers in Fresno, where the sample size is only $70$, and the stores are sparsely distributed with a mean distance of $737$ to the nearest store in this dataset. 

These two parameters (number and density of known locations in a region) emerge as important determinant of success both for the proposed method and the baseline alternative. Essentially, since our approach is to infer locations based on neighbors, we will inherently be limited by the quality of our known neighbor data. This has practical consequences in directing the type of additional information we should collect in order to improve results where they are most important to us. 

For the most part, for various reasons we are interested in locating shops to within the typical distance between shops of the same chain. The per-chain results are shown for the maximum likelihood method in Table \ref{tbl:results-cities-by-chain}. Results for the pharmacy chain are excellent, mostly under $1000$ meters. Sine pharmacy shops of the same chain do not tend to appear in close proximity to each other, these numbers are satisfactory.

For the coffee shop franchise on the other hand we see mixed results. In some cities we are able to infer their locations with median accuracy in the $200-400$ meter range. In other cities we are substantially above $1000$ meters. For the urban densely of popular coffee shops this may well not be sufficient. This directs us to the type of data we must collect, namely additional shops in these specific areas where coverage is currently insufficient for our needs.

In the large city experiment (Table \ref{tbl:results-cities}) results are very similar in essence. Overall the proposed method outperforms the baseline in almost all cases. In a single case (Philadelphia) the $NN-1$ baseline ties with the maximum likelihood approach. Here again results are tied to a large extent to the number and density of the known locations in the city in our dataset. The great accuracy of $173$ meters in NYC for instance is tied to the large sample size and high density in the dataset (See Table \ref{tbl:cities-descriptive}), whereas for Houston we have a large sample size, but low store density, and a median displacement error of $1339$ meters.

\begin{table}
\caption{Median displacement error for large US cities.}
\label{tbl:results-cities}
\begin{center}

\begin{tabular}{|l|r|r|c|}
\hline
City          & NN-1 & NN-3 & MaxLike  \tabularnewline \hline \hline 
New York      & 229  & 297  & 173      \tabularnewline \hline   
Chicago       & 503  & 1161 & 442      \tabularnewline \hline       
Houston       & 1504 & 2510 & 1339     \tabularnewline \hline       
Philadelphia  & 1033 & 1977 & 1033     \tabularnewline \hline       
Phoenix       & 1686 & 2435 & 1069     \tabularnewline \hline       

\end{tabular}
\end{center}
\end{table}

\begin{table*}
\caption{Median displacement error for California cities by chain type. The department shop chain was available in the dataset for some of the cities only.}
\label{tbl:results-cities-by-chain}
\begin{center}

\begin{tabular}{|l|r|r|r|r|}
\hline
city/chain &  coffee shop &  fast food &  pharmacy &  department shop    \tabularnewline \hline \hline
Los Angeles   &        372 &       1253 &        268 &       -  \tabularnewline \hline
San Diego     &       1182 &       1975 &        960 &      247 \tabularnewline \hline
San Francisco &        196 &        917 &        291 &       -  \tabularnewline \hline
San Jose      &       1121 &       1954 &       1335 &     1117 \tabularnewline \hline
Sacramento    &       1101 &       1929 &       1064 &     1555 \tabularnewline \hline
Fresno        &       1821 &       3170 &       1609 &       -  \tabularnewline \hline
Bakersfield   &       1238 &       1335 &        690 &       -  \tabularnewline \hline
\end{tabular}
\end{center}
\end{table*}

\section{Related Work}

{\bf The Weighted Graph Matching Problem (WGMP)} A weighted graph is an ordered pair $(V,W)$ where $V$ is a set of nodes and $W$ is a function $W : V \times V \rightarrow \cal{R}_+$. If the function $W$ is symmetric the graph is undirected, whereas for an arbitrary $W$ the graph is directed. The Weighted Graph Matching Problem (WGMP) \cite{bunke2000graph} is the following: given two weighted graphs $(V_1,W_1)$ and $(V_2,W_2)$  where $|V_1|=|V_2|=N$, find a one-to-one correspondence $\Phi : V_1 \rightarrow V_2$ that minimizes: 

$$J(\Phi) = \sum_{i=1}^N \sum_{j=1}^N (W_1(v_i,v_j) - W_2(\phi(v_i), \phi(v_j)))^2$$

A closely related problem is the Graph Isomorphism Problem (GIP) where the goal is to decide if for a two given graphs $G_i=(V_i,E_i)$, there exists a one-to-one correspondence between their verticals $\phi$ such that $(v_i,v_j) \in E_1 \iff (\phi(v_i),\phi(v_j)) \in E_2$. GIP can easily be formulated as a version of WGMP for which $W_1,W_2 \in \{1,\infty\}$ where $(v_i,v_j) \in E \iff w(v_i,v_j) = 1$. 

Both GIP and WGIP (as well as other versions of them) were extensively studied. These problems appear naturally in applications of pattern recognition, computer vision, neuroscience and others (see for example \cite{berg2005shape}, \cite{zhou2012factorized}, \cite{conte2004thirty}). From theoretical aspect the question whether GIP is in $P$ or $NPC$ is one of the oldest open questions in complexity theory ~\cite{garey2002computers}.  
WGMP is $NP-$Hard, and despite many attempts (see for example \cite{fiori2015spectral},\cite{lyzinski2016graph}), no constant ratio approximation algorithm is known for it.  

{\bf Inferred Location Problems} One of the motivating use-cases for the inference of \textit{store} location in the current work, is to extend this layer of spatial information to \textit{users}, by  associating them with the locations at which their transactions take place. The resulting representation is a heatmap describing a small number of areas which together represent the regular habitat of a person. Previous work has already used purchase history information to predict demographic characteristics of consumers \cite{resheff2017fusing,wang2016your}, and this work may be seen as a further extension of this to geo-spatial information. 

Another field where locations are being inferred from relation graphs combined with a subset of known locations is in digital archeology. An especially inspiring case is that of the Old Assyrian long-distance trade routes. These routes spanned vast distances, between numerous cities and outposts, many of which have since been covered by dust \cite{kool2012old}. Four thousand years later, a combination of Archaeological and Economical graph-based analytics techniques were able to uncover lost locations, based on travel logs detailing the time spent along the route \cite{barjamovic2017trade}.  

One should note that the inference problem we tackle in this paper is different from WGMP also in the structure of the data it operates on. Applying WGMP for our problem would require that the input would be two graphs, a graph of locations whose nodes are indexed in an arbitrary way, and which is weighted by the actual distances between the shops, and a graph (derived from the transactions) whose nodes are indexed according to the internal index of the merchant, and weighted according to customer purchases (we formulate this weighting in Section \ref{sec:methdos}). The goal of WGMP is then to find the permutation that maps the arbitrary indices to the merchant indices, and by this get the locations. 

Even if the list of locations of unknown stores were to be given, the hardness of the graph matching alternative would still require an alternative approach to the global matching of these two graphs. In that case, our approach might be seen as a local matching relaxation, where each node is matched to a location in the graph of real-world locations, based on the already-matched locations. 

\section{Conclusions}

For a large financial aggregation service, knowing where transactions took place is a key enabler for a vast number of location based features. This information is not present in the raw data arriving from financial institutes, and thus has to be inferred. We present a statistical inference formulation of the problem, based on a seed of known locations, and the graph of store to store relationships defined by the shared customer base.  

Experiments on many cities show the proposed method is able to locate stores within neighborhood precision or better. Since the inference process is limited by the density of known locations, as the number of those increases we expect to see improvement in the accuracy of place assignment for unknown shops. 

A drawback of the simple approach described in this paper that in reality the meaning of real-world distances varies according to the type of store one is visiting. For instance, one might travel a relatively longer distance to a preferred super-market than to a coffee-shop (evidence of this is indeed seen in Figure \ref{fig:mfd}). As a result, a relatively distant shop of a one type might share a substantial customer base with a local shop, whereas two distant shops of another kind are unlikely to do so.

Although the algorithmic framework is general enough to support this property of customer behavior in the model, that would require estimating separate functions $p(m|r;c_1, c_2)$ for each two classes of stores $(c_1, c_2) \in C \times C$. However, when scaling to the large dataset containing all US stores, and a very large set $C$ of store classes, the $O(|C|^2)$ functions become prohibitively many. 

Furthermore, we would ideally want to share information between some of the \textit{classes} of stores. For instance, super-markets and hardware shops may very well behave in the same way regarding their profile of customer sharing with local coffee-shops. We do not want to assume any clustering of types of shops, but rather learn these relationships from the data. Future work will focus on the extension of the current method to incorporate these and other attributes of shops and not just the distance between them. 

Another avenue for further improvement is the measure of distance between shops. In this paper we use the Euclidean aerial distance between locations \textit{as the crow flies}. A better measure of effective distance should consider road topology, or maybe measure the time to travel between locations instead. In order to incorporating this richer notion of distance we will need to overcome substantial hurdles in the optimization of the resulting optimization problems.  

Future work will also focus on incorporating a notion of trajectory analysis (for instance by building on the trajectory locality notion introduced in \cite{resheff2016online}).By understanding the path of users in the space of merchants they make purchases from, and enforcing speed constraints, we will be able to further improve the store localization method developed in this paper. Our hope is that by adding trajectory and other additional sources of information we will be able to pin-point locations, without giving up the favorable scalability properties of the approach that enable the application to real-world financial big-data.

\bibliographystyle{splncs04}
\bibliography{lib}

\end{document}